\newtheorem{definition}{Definition}
\newtheorem{theorem}{Theorem}
\newtheorem{remark}{Remark}
\newtheorem{proposition}[theorem]{Proposition}
\begin{document}
	
	\begin{center}
		\large{\textbf{Clustering Sets of Functional Data by Similarity in Law}}
		\end{center}
		\small {\textbf{Dedicated to the memory of Antonio Galves}}
	\begin{center}
		Antonio Galves$^{1,\dagger}$, Fernando Najman$^2$, Marcela Svarc$^{3,4}$, and Claudia D. Vargas$^5$
	\end{center}		
	
	$^{1}$Instituto de Matem\'atica e Estat\'istica, Universidade de São Paulo, São Paulo, Brazil, $\dagger$ Deceased

	$^{2}$Instituto de Computação, Universidade Estadual de Campinas, Campinas, Brazil, fanajman@gmail.com
	
	$^{3}$Departamento de Matem\'atica y Ciencias, Universidad de San Andr\'es, Argentina ,
	msvarc@udesa.edu.ar
	
	$^{4}$CONICET
	
	$^{5}$Instituto de Biof\'isica Carlos Chagas Filho, Universidade Federal do Rio de Janeiro, Rio de Janeiro, Brazil.,
	cdvargas@biof.ufrj.br

	\begin{abstract}
		We introduce a new clustering method for the classification of functional data sets by their probabilistic law, that is, a procedure that aims to assign data sets to the same cluster if and only if the data were generated with the same underlying distribution. This method has the nice virtue of being non-supervised and non-parametric, allowing for exploratory investigation with few assumptions about the data. Rigorous finite bounds on the classification error are given along with an objective heuristic that consistently selects the best partition in a data-driven manner. Simulated data has been clustered with this procedure to show the performance of the method with different parametric model classes of functional data. 
	\end{abstract}
	\textit{Keywords:} Kolmogorov Smirnov statistics, Concentration inequalities, Projection procedure.
	
	\section{Introduction}
	
	An important problem in functional data analysis is to identify among different data sets those generated by the same probabilistic law. To solve this problem, we propose a novel method that clusters sets of functional data by their similarity in law. The method is non-parametric and non-supervised giving it a broad range of applications. The procedure measures the similarity of the data sets empirical distributions and uses this similarity measure to conduct a hierarchical clustering procedure. The partition of the data sets results in a label for each data set which represents estimated equivalence in law. While there are diverse options proposed for clustering functional data samples, clustering procedures for sets of functional data are a less discussed topic.  Clustering sets of real data by their distribution was discussed in Mora-L\'opez and Mora \cite{mora2015adaptive} and Zhu et al. \cite{zhu2021clustering}. However, methods for higher dimensional data cannot be trivially extended from the one-dimensional case.

	To measure the similarity in law between each pair of data sets we used a random projection strategy. We first project each individual functional data sample associated with each data set into a fixed number of randomly generated directions. Then, for each pair of data sets and each random direction, we measure a distance between the one-dimensional empirical distributions distance between the real-valued projections. Our method uses these distances to construct an estimation of the distance of the probabilistic law generating the functional samples. This approach is inspired by the results of Cuesta-Albertos, Fraiman and Ransford \cite{Cuesta2007}, where they show conditions under which the distribution of the projections of the samples of two data sets will be equal if and only if the two laws are the same.

	We also provide finite bounds for the probability of encountering a large error obtained by approximating the distance between laws by the estimate obtained with a finite sample. A non-parametric bound for multivariate data has been presented in Naaman \cite{naaman2021tight}, however, the bound increases linearly with the dimensionality of the data, making it undefined for the functional case. The bound presented here holds for data in Hilbert spaces, i.e. it does not depend on the data dimensionality. The bound also does not assume that the data be generated by any parametric model class. Without loss of generality, we assume in the following that the data are functions living in $L^2([0,T])$, for some fixed real $T > 0$. We also present an adapted version of the bound under $H_0$ designed to be sharper, and show the clustering procedure good performance with functional data simulated with models from two different parametric model classes.

	Informally, the algorithm works as follows. Let $\mathcal{U}$ be a finite set such that $ S_U  = |\mathcal{U}| \geq 2$, and let also $\mathcal{Y}_N^u$ be a family of sets with $N$ functional data samples indexed by $u \in \mathcal{U}$, that is, $\mathcal{Y}_N^u = (Y^u_1, \cdots, Y^u_N)$ with $Y^u_n \in L^2([0,T])$. For simplicity, we assume that all data sets have exactly $N$  samples, but we note that all steps in the procedure can naturally be adapted for sets with different sample sizes. For each $u \in \mathcal{U}$, we denote by $Q^u$ the probabilistic law that generates the samples of the data set $\mathcal{Y}_N^u$. Let also $(B_1, \cdots, B_M)$ be $M$ random directions in $L^2([0,T])$ generated in a suitable way. Using a random projection strategy, we construct an estimated distance $\hat{D}_{N,M}(u,v)$ between the empirical distributions of the $\mathcal{Y}^u_N$ and $\mathcal{Y}^v_N$ functional data sets. We use this estimated distance as a dissimilarity measure for a hierarchical clustering procedure to partition the functional data sets indexed by $u \in \mathcal{U}$ by their law, that is, our goal is to assign two data sets $\mathcal{Y}_N^u$ and $\mathcal{Y}_N^v$ to the same cluster if and only if $Q^u = Q^v$. 
	
	 \section{Clustering procedure} \label{sec:clustproc}

	We propose to use the following hierarchical clustering procedure to retrieve a partition of the functional data sets. Let $\mathcal{P}$ be the set of all partitions of the set $\mathcal{U}$. Given $P \in \mathcal{P}$, let $D$ be some dissimilarity between elements of $\mathcal{U}$. We also define a dissimilarity $D$ for any pair of clusters which by abuse of notation we also denote by $D$. This strategy of extending the dissimilarity to cluster pairs is usually called the \textit{linkage} of the hierarchical clustering procedure. Here we propose to use the \textit{complete linkage}, that is, for any pair of clusters $C$ and $C'$, we define the linkage as
	
	$$
	D(C,C') = \sup_{u \in C, v \in C'}\{D(u,v)\} \ .
	$$
	
	Denote $C_1(P)$ and $C_2(P)$ as two elements of $P$ satisfying $C_1(P) \neq C_2(P)$ and
	$$
	D(C_1(P), C_2(P)) \leq D(C, C'), \mbox{ for all pairs } \ C \in P, \ C' \in P, \ C \neq C' \ .
	$$
	
	In the following, we assume that all pairs have different dissimilarity values such that $C_1(P)$ and $C_2(P)$ are uniquely defined. This will be the case with probability one for our chosen dissimilarity, which will be introduced later and which takes real values.
	
	Let $r_{D}(P)$ be the dissimilarity between $C_1(P)$ and $C_2(P)$, i.e.,
	\begin{equation}\label{eq:DCompLink}
		r_{D}(P) = \inf \left\{ D(C, C'): (C, C') \in P \times P, C \cap C' = \emptyset \right\} \ .
	\end{equation}
	Let us denote $C_{1,2}(P) = \{C_1(P) \bigcup C_2(P)\}$. Consider the following family of recursive partitions. Let
	$$
	P_1(D) = \{\{u\} : u \in \mathcal{U}\}
	$$
	be the partition of singletons, and for $k = 2, \ldots, S_U$ let
	$$
	P_k(D) = \left\{C \in P_{k-1} : C \neq C_1(P_{k-1}), C \neq C_2(P_{k-1}) \right\} \bigcup C_{1,2}(P_{k-1}) \ .
	$$

	We call a \textit{dendrogram model} the pair consisting of the family $P_{1:S_U}(D) = (P_1(D), \cdots, P_{S_U}(D))$ and the associated function $r_{D}$. Dendrogram models can be represented graphically as a rooted and labelled tree. An example of a graphical representation of a dendrogram model is shown in Fig. \ref{fig:ExDen}.
	
	\begin{figure}
		\centering
		\includegraphics[scale = 0.25]{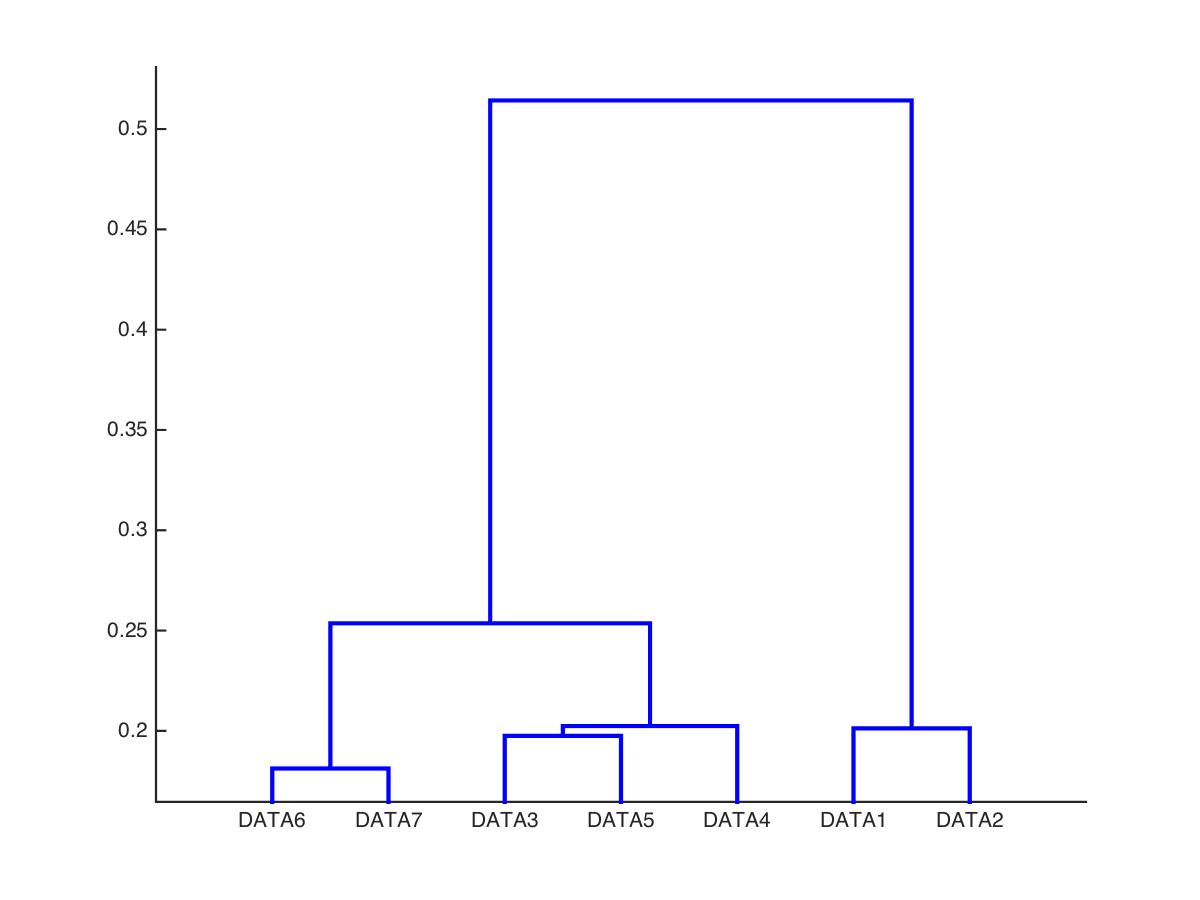}
		\caption{Graphical representation of a dendrogram model. Each labelled leaf represents a data set labelled by an element of $\mathcal{U}$. The height at which two clusters are connected represents the correspondent value of $r_D$.}
		\label{fig:ExDen}
	\end{figure}

	Our goal is to obtain a plug-in $D$ dissimilarity measure for data sets indexed by the elements of $\mathcal{U}$. In addition, we want to choose a measure that gives strong and consistent estimates of the similarity in law between the sets. In the following, we present a measure with such properties. 
	
	\section{Preliminary definitions}

	We propose a dissimilarity function for functional data sets based on random projections which estimates the similarity between the laws associated with these data sets. Let us start by giving some proper definitions to present Proposition \ref{prop:distance}. Let $\mathcal{U}$ be a finite set with cardinality denoted by $S_{\mathcal{U}}$. For some fixed real value $T > 0$ and probability space ($\Omega, \mathcal{A}, \mathbb{P}$), let $F = L^2([0, T])$ and let $\mathcal{F}$ be the Borel $\sigma$-algebra on $F$. For all $u \in \mathcal{U}$, let $Q^u$ be a probability measure on $L^2[0,T]$ and $Y^u:\Omega \rightarrow L^2[0,T]$ be  a random element of $L^2[0,T]$ generated with distribution $Q^u.$ Moreover, let $h$ be an element of the dual space of $F,$ which is $L^2[0,T].$ Denote $Q^u_h$ to the univariate distribution of the random variable $\left< Y^u,h \right>$ and $f^{u,h}$ to its cumulative distribution function.

	\begin{definition}[Carleman condition]\label{def:carleman}
		We say that a probability measure $Q$ on $(F, \mathcal{F})$ satisfies the \textit{Carleman condition} if, for all $i >0$, its absolute moments $m_i = \int||h||Q(dh)$ are finite and
		$$
		\sum_{j \geq 1}m_i^{-1/j} = +\infty \ .
		$$
	\end{definition}
	
	For any $h \in F$, let $Q_h((-\infty,t]) = \mathbb{P}(x \in F : \left<x, h\right> < t) \ .$
	
	\begin{definition}[Continuous law]\label{def:continuity}
		
		We say that a probability measure $Q$ on $(F, \mathcal{F})$ is a \textit{continuous law} if for any $h \in F$ and $t \in \mathbb{R}$, $Q_h((-\infty,t])$ is continuous.
	\end{definition}
	
	We denote by $\mathcal{Q}$ the set of all continuous probability measures on $(F, \mathcal{F})$ which satisfy the Carleman condition.
	
	Given $(u,v) \in \mathcal{U}^2,$ we define a distance $D$ between $Q^u \in \mathcal{Q}$ and $Q^v \in \mathcal{Q}.$ 
	
	\begin{proposition}\label{prop:distance} 
		For all $(u,v) \in \mathcal{U}^2$, let $Q^u$ and $Q^v$ be measures over $L^2([0,T])$ satisfying the regular conditions given by Definitions \ref{def:carleman} and \ref{def:continuity}. Let also $W$ be an independent Gaussian measure on $L^2([0,T])$.
		Then
		\begin{equation}\label{eq:disteo}
			D(u,v) = \int ||f^{u,h}-f^{v,h}||_{\infty}dW(h)
		\end{equation}
		gives a metric over $\mathcal{Q}$.
	\end{proposition}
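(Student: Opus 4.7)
The plan is to verify the four metric axioms, after first checking that the expression defining $D(u,v)$ is well-defined as an integral. Since every $Q \in \mathcal{Q}$ is a continuous law, each $f^{u,h}$ is continuous in $t$, so the sup norm can be replaced by $\sup_{t \in \mathbb{Q}}|f^{u,h}(t) - f^{v,h}(t)|$. For each fixed $t$ the map $h \mapsto f^{u,h}(t) = Q^u(\{x \in F : \langle x, h\rangle \leq t\})$ is measurable on $F$ (by a monotone class argument on the linear functional $x \mapsto \langle x, h\rangle$), so the countable supremum over rational $t$ is also measurable. Since both $f^{u,h}$ and $f^{v,h}$ are CDFs, the integrand is bounded by $1$, and hence the integral is well-defined and finite.

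For the routine axioms, non-negativity is immediate from non-negativity of the integrand, and symmetry follows from $\|f^{u,h} - f^{v,h}\|_\infty = \|f^{v,h} - f^{u,h}\|_\infty$. For the triangle inequality, I would apply the pointwise bound $|f^{u,h}(t) - f^{v,h}(t)| \leq |f^{u,h}(t) - f^{w,h}(t)| + |f^{w,h}(t) - f^{v,h}(t)|$, take the supremum over $t$ (the $L^\infty$ norm itself satisfies the triangle inequality), and integrate against $W$ using linearity and monotonicity of the integral. The direction $Q^u = Q^v \Rightarrow D(u,v) = 0$ is also immediate, since equal laws produce identical projections and hence identical CDFs for every $h$.

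The main obstacle is the remaining direction of identity of indiscernibles, namely $D(u,v) = 0 \Rightarrow Q^u = Q^v$. If $D(u,v) = 0$, then since the integrand is non-negative, $\|f^{u,h} - f^{v,h}\|_\infty = 0$ for $W$-almost every $h$, which is to say $Q^u_h = Q^v_h$ for $h$ in a set of full measure under the non-degenerate Gaussian measure $W$ on $L^2([0,T])$. At this point I would appeal to the result of Cuesta-Albertos, Fraiman and Ransford \cite{Cuesta2007}: whenever two Borel probability measures on a separable Hilbert space both satisfy the Carleman condition and their one-dimensional projections coincide on a set of directions of positive Gaussian measure, the two measures must be identical. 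Since $Q^u, Q^v \in \mathcal{Q}$ both satisfy Carleman, this theorem applies and yields $Q^u = Q^v$, completing the proof.
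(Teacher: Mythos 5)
Your proof is correct and follows essentially the same route as the paper's: the symmetry, non-negativity, and triangle inequality are inherited from the sup norm and the integral, and the nontrivial direction of identity of indiscernibles ($D(u,v)=0 \Rightarrow Q^u = Q^v$) is delegated to Theorem 4.1 of Cuesta-Albertos, Fraiman and Ransford, exactly as in the paper. The only difference is that you spell out the measurability and well-definedness of the integrand, which the paper leaves implicit.
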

	
	\begin{proof}
		By construction $D(u,v)$ is symmetric and inherits the triangle inequality from the infinite norm, $|| \cdot ||_{\infty}.$
		Moreover, as a consequence of Theorem 4.1 in Cuesta-Albertos, Fraiman and Ransford \cite{Cuesta2007} $D(u,v)=0$  if and only if $Q^u=Q^v$.
		
	\end{proof}

	For each $u \in \mathcal{U}$ and $n \in \{ 1, \cdots, N \}$, let $Y^u_n \in F$ be a function generated by law $Q^u \in \mathcal{Q}$. For each $u \in \mathcal{U}$ we call
	\begin{equation}\label{def:sampleset}
		\mathcal{Y}^u _N= \{Y^u_n : n \in \{1,\cdots, N\}\}
	\end{equation}
	as the \textit{sample set $u$}.
	
	For every $n = 1, \ldots, N$ and every $u \in \mathcal{U}$ we call \textit{projection} of $Y^u_n$  onto direction $h$ the following inner product
	\begin{equation*}
		R^{u,h}_{n} = \int_{0}^{T}h(t)Y^u_n(t)dt \ .
	\end{equation*}
	
	For each $u \in \mathcal{U}$, the projection of the data set $u$ in the direction $h$ is naturally defined as
	\begin{equation*}
		\mathcal{Y}_N^{u,h} = \{R^{u,h}_{n}: Y^u_n \in \mathcal{Y}^u_N\}.
	\end{equation*}
	
	The \textit{empirical cumulative distribution function} of $\mathcal{Y}^{u,h}_N$ is given by
	
	\begin{equation*}
		\hat{f}^{u,h}_N(t) = \frac{1}{N}\sum_{R^{u,h}_{n} \in \mathcal{Y}^{u,h}_N}1_{\{R^{u,h}_{n} \leq t\}} \, \, , t \in \mathbb{R} \ ,
	\end{equation*}
	where $1$ denotes the indicator function. 
	
	Given $u,v \in \mathcal{U}$ consider the data sets $\mathcal{Y}^u _N$ and $\mathcal{Y}^v _N$ respectively, denote $D^{u,v}_N(h)$ to the $L_\infty$ distance between the empirical distributions of  density functions $\mathcal{Y}^{u,h} _N$ and $\mathcal{Y}^{v,h} _N,$ i.e.,
	
	\begin{equation}\label{def:distsample}
		D^{u,v}_N(h) = \sup_{t \in \mathbb{R}} \left\{|\hat{f}^{u,h}_N(t)-\hat{f}^{v,h}_N(t)| \right\}.
	\end{equation}
	
	Cuesta-Albertos, Fraiman and Ransford \cite{Cuesta2007} introduce a goodness-of-fit test for functional data based on the equation \eqref{def:distsample}, which, although has discriminating power from an asymptotic theoretical perspective in practice, can be unstable for finite sample size. Hence, a natural idea is to propose a statistic that takes many directions into account.

	\section{Bounds on the error rates}\label{sec:bounderr}

	Let $B = (B_1, \cdots, B_M)$ be $M$ independent realisations of elements in $F$ generated with a Gaussian measure. Following the reasoning of Duarte et al. \cite{duarte_retrieving_2019}, we take this measure to be the Brownian bridge. Let us define the empirical distance between the sample sets $u$ and $v$ as
	\begin{equation}\label{def:empiricaldistance}
		\hat{D}_{N,M}(u,v) = \frac{1}{M}\sum_{m = 1}^M D^{u,v}_N(B_m) \ .
	\end{equation}
	
	Let us present some useful notations. For each $(u,v) \in \mathcal{U}^2$ consider the $L_\infty$ distance between the projections in direction $B,$
	\begin{equation*}
		D^{u,v}(B) := ||f^{u,B}-f^{v,B}||_{\infty} \ ,
	\end{equation*}
	
	Denote also
	\begin{equation*}
		\Delta_N^{u,v}(B) := \Delta_N^{u}(B)+ \Delta_N^{v}(B),
	\end{equation*}
	where $\Delta_N^{u}(B)=||\hat{f}_N^{u,B}-f^{u,B}||_{\infty}$ and $\Delta_N^{v}(B)=||\hat{f}_N^{v,B}-f^{v,B}||_{\infty}.$

	While we do not have direct access to the distance $D(u,v)$, the distance $\hat{D}_{N,M}(u,v)$  gives us a finite approximation. Theorem \ref{th:bounds} shows that the probability of a large error between the estimate and the true distance decays in an exponential manner.
	
	\begin{theorem}\label{th:bounds}
		For all $(u,v) \in \mathcal{U}^2$, let $\mathcal{Y}^u_N$ and $\mathcal{Y}^u_N$ be sample sets which satisfy the regular conditions \ref{def:carleman} and \ref{def:continuity}.
		Let $\hat{D}_{N,M}(u,v)$ be defined as in equation \eqref{def:empiricaldistance}. Then, for any $\gamma \in [0,1]$, exists $C>0$ such that
		\begin{equation}\label{eq:mainbound}
			\mathbb{P}\left(|\hat{D}_{N,M}(u,v) - D(u,v)| \geq \gamma\right) \leq 2e^{\frac{-M\gamma^2}{2}}+2e^{\frac{-M\gamma^2}{32}}+2Ce^{\frac{-N\gamma^2}{16}} \ .
		\end{equation}
	\end{theorem}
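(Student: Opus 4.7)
The plan is a triangle-inequality decomposition that separates the error caused by averaging over only $M$ random projections from the error caused by replacing each true cdf $f^{u,h},f^{v,h}$ by its $N$-sample empirical counterpart. Setting
\[
V := \frac{1}{M}\sum_{m=1}^{M} D^{u,v}(B_m) - D(u,v), \qquad A := \frac{1}{M}\sum_{m=1}^{M}\bigl(D^{u,v}_N(B_m) - D^{u,v}(B_m)\bigr),
\]
one has $\hat{D}_{N,M}(u,v) - D(u,v) = V + A$, hence $\{|\hat{D}_{N,M}(u,v) - D(u,v)| \geq \gamma\} \subset \{|V| \geq \gamma/2\} \cup \{|A| \geq \gamma/2\}$. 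Since $D^{u,v}(B_1),\ldots,D^{u,v}(B_M)$ are i.i.d., take values in $[0,1]$, and have mean $D(u,v)$, Hoeffding's inequality gives $\mathbb{P}(|V| \geq \gamma/2) \leq 2e^{-M\gamma^2/2}$, producing the first summand of \eqref{eq:mainbound}.

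For the $A$-term, the reverse triangle inequality for $\|\cdot\|_\infty$ yields $|D^{u,v}_N(B_m) - D^{u,v}(B_m)| \leq \Delta^{u,v}_N(B_m)$, so $|A| \leq \bar\Delta := \frac{1}{M}\sum_m \Delta^{u,v}_N(B_m)$, and it suffices to bound $\mathbb{P}(\bar\Delta \geq \gamma/2)$. Setting $\mu := \mathbb{E}[\Delta^{u,v}_N(B)\mid \mathcal{Y}^u_N,\mathcal{Y}^v_N]$ and splitting $\{\bar\Delta \geq \gamma/2\} \subset \{|\bar\Delta - \mu| \geq \gamma/4\} \cup \{\mu \geq \gamma/4\}$, observe that conditionally on the data the $\Delta^{u,v}_N(B_m)$ are i.i.d.\ in $[0,2]$ with mean $\mu$; a conditional application of Hoeffding's inequality followed by taking expectation in the data yields $\mathbb{P}(|\bar\Delta - \mu| \geq \gamma/4) \leq 2e^{-M\gamma^2/32}$, the second summand of \eqref{eq:mainbound}.

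The remaining step, bounding $\mathbb{P}(\mu \geq \gamma/4)$, is the crux of the argument. Writing $\mu = \mu_u + \mu_v$ with $\mu_u := \int \|\hat{f}^{u,h}_N - f^{u,h}\|_\infty\,dW(h)$, I would combine two ingredients. First, the DKW inequality applied conditionally on $h$ gives $\mathbb{P}(\|\hat{f}^{u,h}_N - f^{u,h}\|_\infty \geq t \mid h) \leq 2e^{-2Nt^2}$, which after integration in $h$ produces $\mathbb{E}[\mu_u] \leq \sqrt{\pi/(2N)}$. Second, replacing a single $Y^u_n$ by any other element of $L^2([0,T])$ shifts $\hat{f}^{u,h}_N$ pointwise by at most $1/N$ and hence shifts $\mu_u$ by at most $1/N$, so McDiarmid's bounded-differences inequality gives $\mathbb{P}(\mu_u - \mathbb{E}[\mu_u] \geq s) \leq e^{-2Ns^2}$. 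Combining both ingredients with $\{\mu \geq \gamma/4\} \subset \{\mu_u \geq \gamma/8\} \cup \{\mu_v \geq \gamma/8\}$ delivers the third summand $2Ce^{-N\gamma^2/16}$, provided $C$ is chosen large enough to absorb the $O(N^{-1/2})$ bias $\mathbb{E}[\mu_u]$ and to render the bound trivial for the finitely many small values of $N$ where the McDiarmid exponent has not yet reached $-N\gamma^2/16$.

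The main obstacle is exactly this last calibration: McDiarmid naturally produces an exponent of the form $-2N(\gamma/8 - \mathbb{E}[\mu_u])^2$, which only matches $-N\gamma^2/16$ in the limit $N\to\infty$, so $C$ must be chosen to absorb both the cross term in the expansion $2N(\gamma/8)^2-2N\cdot 2(\gamma/8)\mathbb{E}[\mu_u]+\cdots$ and the small-$N$ regime uniformly in $\gamma\in[0,1]$. Once this bookkeeping is handled, assembling the three contributions yields \eqref{eq:mainbound}.
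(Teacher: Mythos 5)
Your decomposition into the projection-averaging error $V$ and the empirical-cdf error $A$, the Hoeffding bound $2e^{-M\gamma^2/2}$ for $V$, and the conditional Hoeffding bound $2e^{-M\gamma^2/32}$ for $\bar\Delta-\mu$ all coincide with the paper's argument (the paper merely treats the two tails separately). The divergence, and the gap, is in the remaining term $\mathbb{P}(\mu\geq\gamma/4)$. The paper does not use McDiarmid at all: it invokes Theorem 3.1 of Cuesta-Albertos, Fraiman and Ransford, which says that under the continuity assumption the law of $\Delta^{u,v}_N(h)$ does not depend on $h\neq 0$ (distribution-freeness of the Kolmogorov--Smirnov statistic), to replace $\mathbb{E}_B[\Delta^{u,v}_N(B)]$ by $\Delta^{u,v}_N(B)$ for a single direction, and then applies the two-sample DKW inequality of Wei and Dudley once; that single application is what produces the constant $C$ of Remark \ref{rem:consC} and the exponent $-N\gamma^2/16$.

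Your McDiarmid route cannot be calibrated to yield that term, and the obstruction is not confined to ``finitely many small values of $N$'' as you assert. Two things go wrong. First, splitting $\{\mu\geq\gamma/4\}$ into $\{\mu_u\geq\gamma/8\}\cup\{\mu_v\geq\gamma/8\}$ and applying McDiarmid with increments $1/N$ over $N$ coordinates gives at best the exponent $-2N(\gamma/8)^2=-N\gamma^2/32$, already a factor of $2$ short of the target $-N\gamma^2/16$ even if the bias were zero; since $e^{-N\gamma^2/32}\big/e^{-N\gamma^2/16}=e^{N\gamma^2/32}$ is unbounded in $N$, no constant $C$ (even one depending on $\gamma$, as the theorem permits) can absorb the discrepancy. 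Second, the bias $\mathbb{E}[\mu_u]\leq\sqrt{\pi/(2N)}$ is of the same order $N^{-1/2}$ as the deviation scale in McDiarmid, so the cross term in $-2N\left(\gamma/8-\mathbb{E}[\mu_u]\right)^2$ contributes a further factor of order $e^{c\gamma\sqrt{N}}$ that also grows without bound. What your argument actually establishes is a bound of the form $2e^{-M\gamma^2/2}+2e^{-M\gamma^2/32}+2e^{-N\gamma^2/32+c\gamma\sqrt{N}}$, which still decays exponentially in $N$ for fixed $\gamma$ and would suffice for the downstream consistency theorem, but it is not inequality \eqref{eq:mainbound}. To recover the statement as written you need the distribution-freeness of $\Delta^{u,v}_N(h)$ in $h$ (or some other device allowing a direct DKW bound on $\mathbb{E}_B[\Delta^{u,v}_N(B)]$), rather than concentration of $\mu_u$ about its nonzero mean.
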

	
	\begin{proof}
		
		Observe that
		
		\begin{equation*}
			\mathbb{P}(|\hat{D}_{N,M}(u,v) - D(u,v)| \geq \gamma) \leq \mathbb{P}(\hat{D}_{N,M}(u,v) - D(u,v) \geq \gamma) + \mathbb{P}(D(u,v) - \hat{D}_{N,M}(u,v)  \geq \gamma) \ .
		\end{equation*}
		
		Let us start by finding the bound of the first term, i.e. 
		
		\begin{equation*}
			\mathbb{P}\left(\frac{1}{M}\sum_{m = 1}^MD^{u,v}_N(B_m)-D(u,v)\geq \gamma\right) \ .
		\end{equation*}

		By triangle inequality, we obtain that
		\begin{equation*}
			D^{u,v}(B) \leq D^{u,v}_N(B) + \Delta^{u,v}_N(B) \ .
		\end{equation*}
		Then,
		
		\begin{eqnarray}
			& & \mathbb{P}\left(\frac{1}{M}\sum_{m = 1}^M D^{u,v}_N(B_m)-D(u,v)\geq \gamma\right) \nonumber \\ 
			&\leq& \mathbb{P}\left(\frac{1}{M}\sum_{m = 1}^M D^{u,v}(B_m) - D(u,v) + \frac{1}{M}\sum_{m = 1}^M\Delta^{u,v}_N(B_m) \geq \gamma\right) \nonumber \\ 
			&\leq& \mathbb{P}\left(\frac{1}{M}\sum_{m = 1}^M D^{u,v}(B_m) - D(u,v) \geq \gamma /2\right) + \mathbb{P}\left(\frac{1}{M}\sum_{m = 1}^M\Delta^{u,v}_N(B_m) \geq \gamma/2\right). \label{eq:separatig}
		\end{eqnarray}
		
		
		
		
		We have 
		$$
		D(u,v) = \int D^{u,v}(h)dW'(h) = \mathbb{E}_B[D^{u,v}(B)] \ ,
		$$
		where $W'$ is the Brownian bridge measure.

		Therefore, by the law of total probability, the first term in \eqref{eq:separatig} is bounded by Hoeffding's inequality \cite{hoeffding1994probability}
		\begin{equation}\label{eq:hoeffbound}
			\mathbb{P}\left(\frac{1}{M}\sum_{m = 1}^M D^{u,v}(B_m) - D(u,v) \geq \gamma/2\right) \leq e^{\frac{-M\gamma^2}{2}}
		\end{equation}
		
		The second term of \eqref{eq:separatig} can be bounded as follows,
		\begin{eqnarray}
			& & \mathbb{P}\left(\frac{1}{M}\sum_{m = 1}^M\Delta^{u,v}_N(B_m) \geq \gamma/2\right) \nonumber \\   &=& \mathbb{P}\left(\frac{1}{M}\sum_{m = 1}^M\Delta^{u,v}_N(B_m) - \mathbb{E}_B[\Delta^{u,v}_N(B)] + \mathbb{E}_B[\Delta^{u,v}_N(B)] \geq \gamma/2\right) \label{eq:lawtotalunion1}\\ 
			&\leq & \mathbb{P}\left(\frac{1}{M}\sum_{m = 1}^M\Delta^{u,v}_N(B_m) - \mathbb{E}_B[\Delta^{u,v}_N(B)] \geq \gamma/4\right) + \mathbb{P}\left(\mathbb{E}_B[\Delta^{u,v}_N(B)] \geq \gamma/4\right) \ . \nonumber
		\end{eqnarray}
		
		Inequality (\ref{eq:lawtotalunion1}) holds by the law of total union.
		
		
		
		As in equation \eqref{eq:hoeffbound} the first term is again bounded by Hoeffding
		\begin{eqnarray*}
			\mathbb{P}\left(\frac{1}{M}\sum_{m = 1}^M\Delta^{u,v}_N(B_m) - \mathbb{E}_B[\Delta^{u,v}_N(B)] \geq \gamma/4\right) \leq e^{\frac{-M\gamma^2}{32}} \ .
		\end{eqnarray*}

		Theorem 3.1 of Cuesta-Albertos, Fraiman and Ransford \cite{Cuesta2007} gives us that the distribution of $\Delta^{u,v}_N(h)$ is independent of $h$ for any $h \in F/\{0\}$. Therefore, it exits $C>0$ such that we have

		\begin{eqnarray*}
			\mathbb{P}(\mathbb{E}_B[\Delta^{u,v}_N(B)] \geq \gamma/4) = \mathbb{P}(\Delta^{u,v}_N(B) \geq \gamma/4) \leq Ce^{\frac{-N\gamma^2}{16}} \ ,
		\end{eqnarray*}
		with probability $1$ by the two sample DKW inequality given in Theorem 1 in Wei and Dudley \cite{wei2012two}.
		
		To finish, note that by the triangle inequality
		\begin{equation*}
			D^{u,v}_N(B) \leq D^{u,v}(B) + \Delta^{u,v}_N(B) \ ,
		\end{equation*}
		we also have
		
		\begin{eqnarray}
			& & \mathbb{P}\left(D(u,v)-\frac{1}{M}\sum_{m=1}^MD^{u,v}_N(B_m)\geq \gamma\right) \nonumber \\
			&\leq & \mathbb{P}\left(D(u,v)-\frac{1}{M}\sum_{m=1}^M D^{u,v}(B_m)+\frac{1}{M}\sum_{m = 1}^M\Delta^{u,v}_N(B_m)\geq \gamma\right)  \nonumber \\ \label{eq:separating2}
			&\leq & \mathbb{P}\left(D(u,v)-\frac{1}{M}\sum_{m=1}^M D^{u,v}(B_m) \geq \gamma /2\right)+\mathbb{P}\left(\frac{1}{M}\sum_{m = 1}^M\Delta^{u,v}_N(B_m)\geq \gamma/2\right).    
		\end{eqnarray}

		

		As in the previous case, the first term in equation \eqref{eq:separating2} is bounded by Hoeffding and the second is bounded following the exact same reasoning as the bound of the second term of 
		\eqref{eq:separatig}.
	\end{proof}
	
	\begin{remark}\label{rem:consC}
		The constant $C$ is discussed in Wei and Dudley \cite{wei2012two}. They show that the bound holds for any $C\geq e$, and also holds for some $C_N$ which approaches $2$ as $N$ increases. We refer to this article for the choice of $C_N$, as to obtain a more powerful statistic.
	\end{remark}

	\section{Consistency of the partition selection from a dendrogram model}
	
	The results of Section \ref{sec:bounderr} show that $\hat{D}_{M,N}$ is a consistent estimator of a distance in law between the sets given by equation (\ref{eq:disteo}) since it is a sum of exponentially decreasing terms. Therefore, we propose to use $\hat{D}_{M,N}$ as the plug-in distance for the procedure described in Section \ref{sec:clustproc}, which returns a $(P_{1:S_U}(\hat{D}_{M,N}),r_{\hat{D}_{M,N}})$ random dendrogram model, containing a family of nested partitions. Given a real-valued threshold a partition can be selected from the dendrogram. This can be shown graphically, as in Fig. \ref{fig:ExDenCut}.
	
	\begin{figure}
		\centering
		\includegraphics[scale = 0.25]{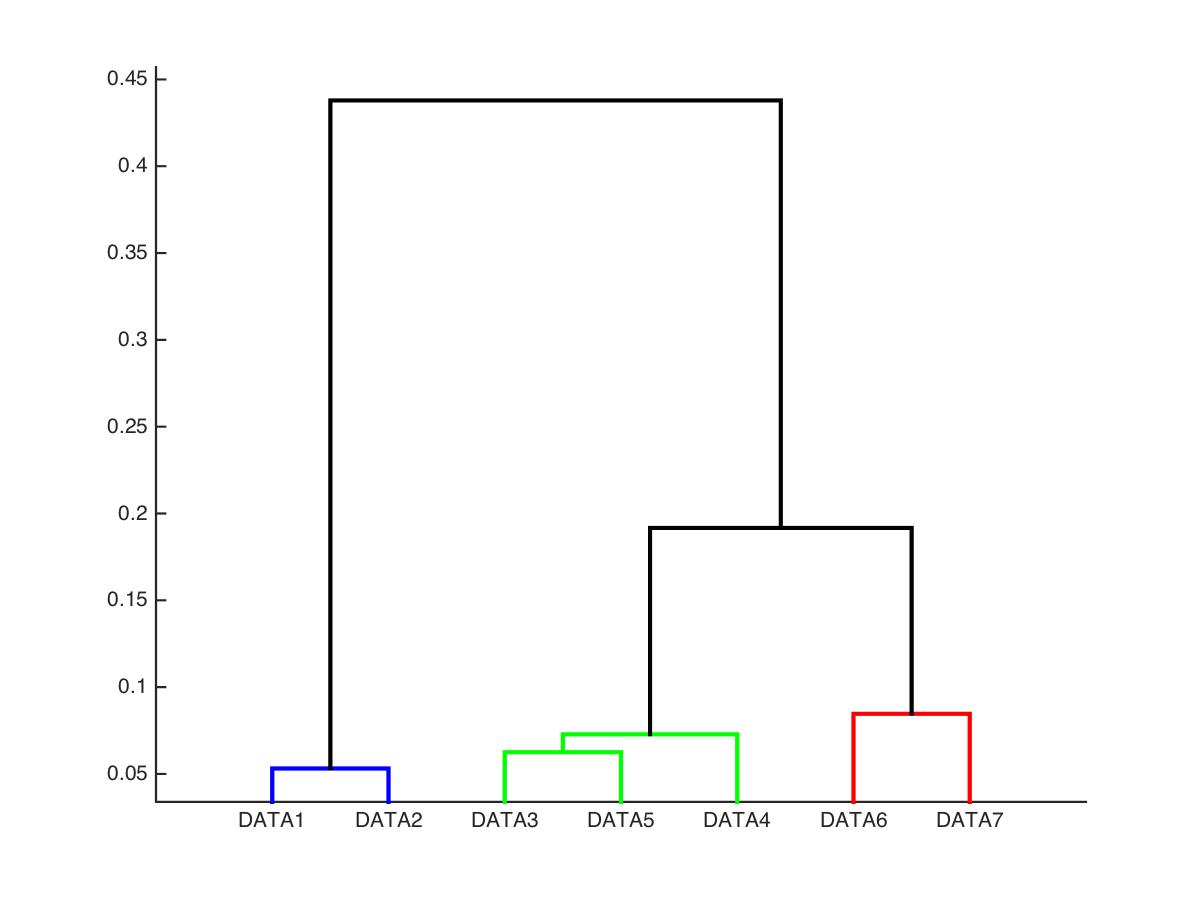}
		\caption{Graphical representation of a dendrogram model, with a specific partition selected and identified by different colors. The partition selected contains three clusters, represented by the branches in red, green, and blue.}
		\label{fig:ExDenCut}
	\end{figure}
	
	For our classification goal, we need a heuristic to choose cut threshold. We start by noting that the distance measure $\hat{D}_{M,N}$ is closely related to the statistic proposed in Cuesta-Albertos, Fraiman and Ransford \cite{cuesta2007sharp} to perform the goodness-of-fit tests.

	Cuesta-Albertos, Fraiman and Ransford  \cite{cuesta2007sharp} introduced test based on one-dimensional random projections of $\mathcal{Y}_N^u$ and $\mathcal{Y}_N^v$ to determine whether both sets were generated by the same law, i.e. testing
	
	\begin{equation}\label{hypothesis}
		H_0: Q^u=Q^v \mbox{ vs } H_A: Q^u\neq Q^v.
	\end{equation}
	
	The test statistic is
	
	\begin{equation}\label{KS-stat}
		KS\left(\hat{f}_N^{u,B},\hat{f}_N^{v,B}\right) = \sqrt{\frac{N}{2}} D_N^{u,v}(B).
	\end{equation}

	The null hypothesis is rejected at level $\alpha$ when $KS\left( \hat{f}_N^{u,B},\hat{f}_N^{v,B}\right)>\eta_{\alpha},$ the critical value is obtained from the asymptotic Kolmogorov distribution Kolmogorov \cite{an1933sulla}.

	The results presented in Section \ref{sec:bounderr} allow us to perform a test with the finite sample exponential bound on the statistic $\hat{D}_{M,N}$. In summary, we can select the partition from the dendrogram model using a criterion based on the goodness-of-fit test statistic behavior under the null hypothesis. The null hypothesis is rejected  at level $\alpha$ whenever
	\begin{equation}\label{statB}
		\hat{D}_{M,N}(u,v)>\gamma_{\alpha},
	\end{equation}
	where $\gamma_{\alpha}$ is a value which ensures the level of the test.

	Informally, this choice of threshold gives us a consistent clustering procedure for the following reasons. Let  $\mathcal{Y}_N^{u_1},\dots,\mathcal{Y}_N^{u_{S_U}}$ be random data sets generated according to $u_1,\dots,u_{S_U} \in \mathcal{U}.$ 
	Let $[u^*]$ be the equivalence class given by,
	$$
	[u^*]=\{ u  \in \mathcal{U} | Q^u=Q^{u^*}\} \ .
	$$
	Let $P^*$ be the partition given by the quotient set, where all the indexes $u$ of the data sets $\mathcal{Y}_N^{u}$ belonging to the same group of the partition have been generated by the same distribution.

	Let $\mathcal{D} \in \mathbb{R}^{S_U \times S_U}$ be the distance matrix, whose entry $(u,v)$, has the distance between $Q^{u}$ and $Q^{v}$ given by \eqref{eq:disteo}. Then $D(u,v)>0$ if and only if $Q^u\neq Q^v$, with $D(u,v)=0$ otherwise. Without loss of generality, consider a permutation of the rows and columns of $\mathcal{D}$ such that we have a block matrix, where the blocks of the diagonal are zero-squared matrices, each defining a cluster,while outside the zero-squared blocks of the diagonal the entries are positive distances. So from $\mathcal{D}$ we get the partition $P^*$.  
	
	Let $\mathcal{D}_{N} \in \mathbb{R}^{S_U \times S_U}$ be the empirical counterpart of $\mathcal{D}$, where each entry $\hat{D}_{M_N,N}(u,v)$ is the empirical distance between $\mathcal{Y}_N^{u}$ and $\mathcal{Y}_N^{v}$ as defined in \eqref{def:empiricaldistance}. This matrix is an empirical approximation of $\mathcal{D}$ and will therefore give us the correct partition for a sufficiently large $N$.

	To show a consistency result of the clustering procedure, let us give some definitions. Let $\alpha = \alpha_N$ such that
	$\alpha_N \rightarrow 0$ and  $\log(2/\alpha_N)/N\rightarrow 0$ as $N \rightarrow +\infty.$ Let $\hat{P}_{k^*}$ be the partition obtained from $\mathcal{D}_N$ using the threshold $\gamma_{\alpha_N}$, 
	$$
	\hat{P}_{k^*} = \left\{P_{K}(\hat{D}_{M,N}): K = \max_k
	\{r_{\hat{D}}(P_k(\hat{D}_{M,N})) \leq \gamma_{\alpha_N}\} \right\} \ .
	$$
	
	We also take $M$ as a function of $N$ and denote it as $M_N$. Let $M_N$ be such that $\log(2/\alpha_N)/M_N \rightarrow 0,$ as $N \rightarrow \infty$.

	\begin{theorem}
		Under the same setting stated in Theorem \ref{th:bounds}, for every $u \in \mathcal{U}$, let $\mathcal{Y}^u$ be a set of functional data assuming values in $F$ with associated law $Q^u \in \mathcal{Q},$ such that satisfy the regular conditions given by Definitions \ref{def:carleman} and \ref{def:continuity}. Then,
		$$
		\lim_{N \rightarrow \infty}\mathbb{P}(\hat{P}_{k^*} \neq P^*) = 0 \ .
		$$

	\end{theorem}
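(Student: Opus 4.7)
The plan is to reduce consistency of the whole clustering procedure to pairwise concentration by defining two favorable events on which the output of complete linkage is forced to equal $P^*$. Concretely, set
\[
E_1 = \{\hat{D}_{M_N,N}(u,v) \leq \gamma_{\alpha_N} \text{ for every pair with } Q^u = Q^v\},
\]
\[
E_2 = \{\hat{D}_{M_N,N}(u,v) > \gamma_{\alpha_N} \text{ for every pair with } Q^u \neq Q^v\}.
\]
Since $P^* = \{[u] : u \in \mathcal{U}\}$ is determined by the indicator matrix $\bigl(1\{Q^u=Q^v\}\bigr)_{u,v}$, the problem splits into showing that $E_1 \cap E_2$ forces $\hat{P}_{k^*} = P^*$ (a deterministic, combinatorial fact about complete linkage) and that $\mathbb{P}(E_1^c) + \mathbb{P}(E_2^c) \to 0$ (where Theorem~\ref{th:bounds} does all the work).

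For the deterministic step, I would proceed by induction on $k$ along the merge sequence $P_1(\hat D),\dots,P_{S_U}(\hat D)$, showing that every cluster appearing in $P_k$ is contained in a single equivalence class $[u]$ as long as $k \leq |P^*|$ iterations. The inductive step uses that complete linkage between two sub-clusters of the same $[u]$ is at most the largest within-class pairwise distance, hence $\leq \gamma_{\alpha_N}$ on $E_1$, while complete linkage between two sub-clusters that meet distinct classes realizes at least one between-class pairwise distance, hence $> \gamma_{\alpha_N}$ on $E_2$; the greedy minimum-linkage rule therefore always merges inside a class until exactly $P^*$ is reached. After that point every remaining merge is between distinct classes, so $r_{\hat D}(P_k) > \gamma_{\alpha_N}$ for all $k > |P^*|$, and $r_{\hat D}(P_{|P^*|}) \leq \gamma_{\alpha_N}$ (if $|P^*| < S_U$) makes the cut rule pick out $P^* = \hat P_{k^*}$ exactly.

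For the probabilistic step, $\mathbb{P}(E_1^c)$ is controlled by a union bound over at most $\binom{S_U}{2}$ pairs; for each such pair, $D(u,v)=0$ by Proposition~\ref{prop:distance}, so Theorem~\ref{th:bounds} with $\gamma = \gamma_{\alpha_N}$ gives
\[
\mathbb{P}(\hat D_{M_N,N}(u,v) > \gamma_{\alpha_N}) \leq 2e^{-M_N \gamma_{\alpha_N}^2/2} + 2e^{-M_N \gamma_{\alpha_N}^2/32} + 2C e^{-N \gamma_{\alpha_N}^2/16},
\]
which by the definition of $\gamma_{\alpha_N}$ is $\leq \alpha_N$; since $\alpha_N \to 0$ and $S_U$ is fixed, $\mathbb{P}(E_1^c) \to 0$. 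For $E_2^c$, set $\delta_{\min} = \min\{D(u,v) : Q^u \neq Q^v\}$, which is strictly positive as the minimum of finitely many positive reals; because the conditions $\log(2/\alpha_N)/M_N \to 0$ and $\log(2/\alpha_N)/N \to 0$ force $\gamma_{\alpha_N} \to 0$, for $N$ large enough $\gamma_{\alpha_N} < \delta_{\min}/2$, so $\{\hat D_{M_N,N}(u,v) \leq \gamma_{\alpha_N}\} \subseteq \{|\hat D_{M_N,N}(u,v) - D(u,v)| \geq \delta_{\min}/2\}$, and Theorem~\ref{th:bounds} applied with fixed $\gamma = \delta_{\min}/2$ gives a bound that vanishes as $N, M_N \to \infty$. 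A final union bound completes $\mathbb{P}(E_2^c) \to 0$.

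The main obstacle I anticipate is the combinatorial step of aligning the output of complete linkage with the level-set cut at threshold $\gamma_{\alpha_N}$; the induction must simultaneously track that no premature inter-class merge can occur and that the cut index $K = \max\{k : r_{\hat D}(P_k) \leq \gamma_{\alpha_N}\}$ lands exactly at $|P^*|$. A secondary care point is ensuring that the chosen $\gamma_{\alpha_N}$ indeed tends to zero (which forces writing out that $\gamma_{\alpha_N}^2$ is at least of order $\log(1/\alpha_N)/\min(M_N,N)$), so that the $E_2$ argument can rely on $\gamma_{\alpha_N} < \delta_{\min}/2$ eventually; everything else is a routine application of Theorem~\ref{th:bounds} together with the union bound over the finite index set $\mathcal{U}$.
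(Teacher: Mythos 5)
Your proposal follows essentially the same route as the paper: decompose the failure event into the two families of pairwise bad events (same-law pairs exceeding the threshold, different-law pairs falling below it), control each by a union bound over the finitely many pairs via Theorem \ref{th:bounds}, and use that $\gamma_{\alpha_N}\to 0$ while the minimum between-class distance stays bounded away from zero. The only difference is that you make explicit the deterministic claim that complete linkage with the cut rule recovers $P^*$ on the good event, a step the paper leaves implicit when it asserts $\mathbb{P}(\hat{P}_{k^*}\neq P^*)\leq \mathbb{P}(\mathcal{U}_N\cup\mathcal{U}'_N)$; spelling it out is a strengthening, not a divergence.
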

	
	\begin{proof}
		Let $U = \{(u, v) \in \mathcal{U}^2 : Q^u \neq Q^{v}\}$ and $U' = \{(u, v) \in \mathcal{U}^2 : Q^u = Q^{v}\}$. Let also $d(u,v) = \inf_{(u,v)\in U^2}\{D(u,v):u\neq v\}$. Then, for any $\beta \in (0, d(u,v))$ we define
		
		$$
		\mathcal{U}_N = \bigcup_{(u,v) \in U} \left\{\hat{D}_{M_N,N}(u, v) \leq \beta\right\},
		$$
		and
		$$
		\mathcal{U}'_N = \bigcup_{(u,v) \in U'} \left\{\hat{D}_{M_N,N}(u, v) > \beta \right\}
		$$
		
		Then
		$$
		\mathbb{P}(P_{k^*} \neq P^*) \leq \mathbb{P}(\mathcal{U}_N \cup \mathcal{U}'_N) = \mathbb{P}(\mathcal{U}_N) + \mathbb{P}(\mathcal{U}'_N) \ .
		$$
		
		Hence, it is enough to show that exists $N_0(\zeta,\beta)$ such that $\mathbb{P}(\mathcal{U}_N) < \frac{\zeta}{2}$ and $N_1(\zeta,\beta)$ such that $\mathbb{P}(\mathcal{U}'_N) < \frac{\zeta}{2}$ for every $N > N_0(\zeta,\beta)$ and $N>N_1(\zeta,\beta)$.
		
		Theorem \ref{th:bounds} gives an exponential bound between the distance  \eqref{eq:disteo} and \eqref{def:empiricaldistance}, then 
		\begin{equation*}
			\mathbb{P}\left( \hat{D}_{M_N,N}(u,v)>\gamma_{\alpha_N}\right)\leq \alpha_N,
		\end{equation*}
		if $Q^u = Q^v.$

		Moreover, we get that $\gamma_{\alpha_N} \rightarrow 0$ whenever

		$$
		\max\left\{12\sqrt{\frac{2\log\left(2/\alpha_N\right)}{M_N}} ,12\sqrt{\frac{\log\left(2C/\alpha_N\right)}{N}}\right\} \longrightarrow 0,
		$$
		as $\alpha_{N} \rightarrow 0.$  Note this convergence is guaranteed by the definitions of $\alpha_N$ and $M_N$.

		Then, whenever $Q^u \neq Q^v$, for $N$ large enough  there exists $\tilde{d}>0$ such that
		
		\begin{equation*}
			\mathbb{P}\left( \hat{D}_{M_N,N}(u,v)<\tilde{d}+\gamma_{\alpha_N}\right)\geq 1-\alpha_N ,
		\end{equation*}

		Then, for $\beta<\tilde{d}+\gamma_{\alpha_N}$
		\begin{eqnarray*}
			\mathbb{P}(\mathcal{U}_N) &\leq &  \sum_{(u,u') \in U} \mathbb{P}\Big( \hat{D}_{M_N,M}(u,v) \leq \beta \Big) \leq \\
			&\leq &  \frac{S_U (S_U-1)}{2} \alpha_N.
		\end{eqnarray*} 
		Therefore there exists $N_0(\zeta, \beta)$ such that $\mathbb{P}(\mathcal{U}_N) < \frac{\zeta}{2}$ for any $\zeta$.
		
		Also, for $N$ large enough we have $\beta > \gamma_{\alpha_N}$, and therefore 
		\begin{eqnarray*}
			\mathbb{P}(\mathcal{U'}_N) &\leq &  \sum_{(u,u') \in U'} \mathbb{P}\Big( \hat{D}(u,v) > \beta \Big) \leq \\
			&\leq &  \frac{S_U (S_U-1)}{2} \alpha_N.
		\end{eqnarray*} 
		Then there exists $N_1(\zeta, \beta)$ such that $\mathbb{P}(\mathcal{U'}_N) < \frac{\zeta}{2}$ for any $\epsilon$.
		
		Then, for $N$ large enough the clustering procedure with $k$ clusters will retrieve a partition which is coincident with $P^*$ with arbitrarily high probability.

	\end{proof}
	
	\section{An empirical Bernstein bound for threshold selection}\label{sec:bernthres}

	The proof of the theorem \ref{th:bounds} uses Hoeffding's inequality and relies on the fact that the random variables  $D_N^{u,v}(B)$ are bounded in $[0,1].$ However, if the variance of $D_N^{u,v}(B)$ is small compared to the range of the bound, a sharper bound can be obtained using a Berstein-type inequality. This strategy presents a challenge, since the variance of $D_N^{u,v}(B)$ is unknown. To overcome this problem, we propose to use an empirical Berstein inequality, introduced in Maurer and Pontil \cite{maurer2009empirical}, which uses the empirical variance of $D_N^{u,v}(B)$ instead of the theoretical one.
	
	Using this strategy, let us show that under $H_0$ we have an alternative bound for the estimation error. First, let us define the following ancillary variables. Let $\hat{V}_B[\bar{D}^{u,v}_N]$ be the empirical variance in $B$ of $D^{u, v}_N(B)$, that is
	$$
	\hat{V}_B[\bar{D}^{u,v}_N] =  \frac{1}{M-1}\sum_{m=1}^M \left[D^{u, v}_N(B_m)-\bar{D}^{u,v}_N \right]^2  \ ,
	$$
	where $\bar{D}^{u,v}_N = \frac{1}{M}\sum_{m=1}^MD^{u, v}_N(B_m)$. Let also, for any $\delta \in (0,1)$,
	$$
	\Gamma^{u,v}(\delta) = \sqrt{\frac{2\hat{V}_B[\bar{D}^{u,v}_N]log(2/\delta)}{M}},
	$$
	and
	$$
	\epsilon(\delta) = \frac{7\log(2/\delta)}{3(M-1)} \ .
	$$
	\begin{theorem}\label{th:boundH0Bernstein}
		Let $\mathcal{Y}^u_N$ and $\mathcal{Y}^v_N$ be two sample sets which follow the regular conditions \ref{def:carleman} and \ref{def:continuity}. Let also $Q^u=Q^v$.   Then for any $\gamma \in (0,1)$ and $\delta$ such that $\epsilon(\delta)<\gamma$ we have
		$$
		\mathbb{P}(|\hat{D}_{M,N}(u,v) - D(u,v)| \geq \Gamma^{u,v}(\delta)+\gamma) \leq C\exp(-N(\gamma-\epsilon(\delta))^2)+\delta
		$$
		where $C$ is the same constant discussed in Remark \ref{rem:consC}.

	\end{theorem}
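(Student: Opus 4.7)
The plan is to reduce the problem to two concentration statements — one over the Brownian bridge directions $B_1,\ldots,B_M$ conditional on the samples, the other over the sample sets themselves — and combine them by a union bound. Under $H_0$ the true distance is $D(u,v)=0$, so the target event simplifies to $\{\hat D_{M,N}(u,v)\ge \Gamma^{u,v}(\delta)+\gamma\}=\{\bar D_N^{u,v}\ge \Gamma^{u,v}(\delta)+\gamma\}$. Conditioning on $\mathcal Y_N^u,\mathcal Y_N^v$ makes $D_N^{u,v}(B_1),\ldots,D_N^{u,v}(B_M)$ i.i.d.\ $[0,1]$-valued random variables, which is exactly the setup required by the Maurer--Pontil empirical Bernstein inequality and is what makes the data-adaptive quantity $\hat V_B[\bar D^{u,v}_N]$ (hence $\Gamma^{u,v}(\delta)$) the natural scale.

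Concretely, I would split
$$
\bar D_N^{u,v}=\bigl(\bar D_N^{u,v}-\mathbb{E}_B[D_N^{u,v}(B)\mid\mathcal Y_N^u,\mathcal Y_N^v]\bigr)+\mathbb{E}_B[D_N^{u,v}(B)\mid\mathcal Y_N^u,\mathcal Y_N^v].
$$
If $\bar D_N^{u,v}\ge\Gamma^{u,v}(\delta)+\gamma$, then either (a) the first summand is at least $\Gamma^{u,v}(\delta)+\epsilon(\delta)$, or (b) the second is at least $\gamma-\epsilon(\delta)$; the hypothesis $\epsilon(\delta)<\gamma$ is exactly what makes (b) non-vacuous. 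Applying Maurer--Pontil conditionally on the samples shows that $\mathbb{P}((a)\mid\mathcal Y)\le\delta$, and taking the outer expectation preserves the $\delta$ bound unconditionally.

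For (b) the key observation is that under $H_0$ the projected CDFs coincide, $f^{u,B}=f^{v,B}$, so $D_N^{u,v}(B)$ is precisely the two-sample Kolmogorov--Smirnov statistic for two samples of size $N$ drawn from the common continuous real-valued law $Q^u_B=Q^v_B$. By Kolmogorov's distribution-free property—the same fact Cuesta-Albertos, Fraiman and Ransford's Theorem~3.1 exploits and that is used at the end of the proof of Theorem~\ref{th:bounds}—the law of $D_N^{u,v}(B)$ is free of $B$. I would use this to equate the tail $\mathbb{P}(\mathbb{E}_B[D_N^{u,v}(B)\mid\mathcal Y]\ge \gamma-\epsilon(\delta))$ with the corresponding tail of $D_N^{u,v}(B_0)$ at any single direction $B_0$, and then invoke the two-sample DKW bound of Wei and Dudley to obtain $C\exp(-N(\gamma-\epsilon(\delta))^2)$. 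Assembling (a) and (b) by the union bound then gives the stated estimate.

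The main technical obstacle is the transfer step in (b): the distribution-free property gives equality of the marginal laws of $D_N^{u,v}(B)$ across directions, but the quantity being averaged is $\mathbb{E}_B[D_N^{u,v}(B)\mid\mathcal Y]$, and making the passage from the tail of this mixed object to the tail of a single $D_N^{u,v}(B_0)$ rigorous requires the same subtle manoeuvre that closes the proof of Theorem~\ref{th:bounds}. Once that reduction is in place, the empirical Bernstein step and the DKW step plug in mechanically and the factor $\epsilon(\delta)$ slack in $\gamma$ is precisely what absorbs the Bernstein higher-order correction so that the DKW exponent sees $(\gamma-\epsilon(\delta))^2$.
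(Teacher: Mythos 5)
Your proposal follows essentially the same route as the paper: the same reduction under $H_0$, the same add-and-subtract of $\mathbb{E}_B[D^{u,v}_N(B)]$ with thresholds $\Gamma^{u,v}(\delta)+\epsilon(\delta)$ and $\gamma-\epsilon(\delta)$, the empirical Bernstein bound of Maurer--Pontil for the first term, and the distribution-freeness of the projected statistic plus the Wei--Dudley two-sample DKW inequality for the second (the paper routes the second term through $\Delta^{u,v}_N(B)$ rather than invoking the KS statistic directly, but under $H_0$ this is the same estimate). The transfer from the tail of $\mathbb{E}_B[D^{u,v}_N(B)]$ to that of a single direction, which you flag as the delicate step, is handled in the paper exactly as in its proof of Theorem \ref{th:bounds}.
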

	\begin{proof}
		Under $H_0$ we have $D(u,v) = 0$. Therefore the probability simplifies to
		$$\mathbb{P}\left(\frac{1}{M}\sum_{m=1}^M D^{u,v}_{N}(B_m) \geq \Gamma^{u,v}(\delta)+\gamma\right).
		$$
		
		Then we have,
		\begin{eqnarray}\label{eq:unionH0}
			\mathbb{P}\left(\frac{1}{M}\sum_{m=1}^MD^{u,v}_{N}(B_m) +\mathbb
			E_B[D^{u,v}_{N}(B)] - \mathbb{E}_B[D^{u,v}_{N}(B)]\geq \Gamma^{u,v}(\delta)+\gamma-\epsilon(\delta)+\epsilon(\delta)\right) \leq \\ \nonumber  \leq \mathbb{P}\left(\frac{1}{M}\sum_{m =1}^{M}D^{u,v}_{N}(B_m) - \mathbb{E}_B[D^{u,v}_{N}(B)]\geq \Gamma^{u,v}(\delta)+\epsilon(\delta)\right) + \mathbb{P}\left(\mathbb
			E_B[D^{u,v}_{N}(B)]\geq \gamma-\epsilon(\delta)\right) \ .
		\end{eqnarray}
		Now note that 
		\begin{eqnarray*}
			\mathbb{P}\left(\mathbb
			E_B[D^{u,v}_{N}(B)]\geq \gamma-\epsilon(\delta)\right) \leq \mathbb{P}\left(\mathbb
			E_B[\Delta^{u,v}_{N}(B)]\geq \gamma-\epsilon(\delta)\right) \leq Ce^{-N(\gamma-\epsilon(\delta))^2}
		\end{eqnarray*}
		by the same arguments used in the proof of Theorem \ref{th:bounds}.
		The first term of \eqref{eq:unionH0} is bounded by $\delta$ following the empirical Berstein's inequality of Maurer and Pontil  \cite{maurer2009empirical} which achieves the upper bound and finishes the proof.

	\end{proof}
	Since we have access to $\Gamma^{u,v}(\delta)$ from the data for all $(\mathcal{Y}^u_N,\mathcal{Y}^v_N)$, this bound gives us a heuristic for choosing a partition from the dendrogram model. We control the number of random directions and suggest choosing them such that $M > N$. This results in the bound obtained in Theorem \ref{th:boundH0Bernstein} being dominated by its first term, giving us a bound close to a DKW-type bound Massart \cite{massart1990tight}.
	
	The random variables $\hat{V}_B[\bar{D}^{u,v}_N]$ do not necessarily have the same value for all pairs $(u,v)$. To deal with this, we select our threshold using the worst case among all the comparisons, this means taking the maximum variance among all the  pairs $(u,v)$. Let $\Bar{V}^*$ be
	$$
	\Bar{V}^* = \max_{(u,v) \in \mathcal{U}^2}\left\{\hat{V}_B[\bar{D}^{u,v}_N]\right\} \ ,
	$$
	then replace $\Gamma^{u,v}(\delta)$ by
	$$
	\Gamma^*(\delta) = \sqrt{\frac{2V^*\log(2/\delta)}{M}}.
	$$
	The following threshold controls $\hat{D}_{M,N}(u,v)$ for all pairs $(u,v)\in\mathcal{U}^2$,
	
	\begin{definition}\label{def:gammaN}
		$$\gamma_{\alpha_N}^* = \inf_{\delta\in(0,\alpha_N)}\left\{\Gamma^*(\delta)+\sqrt{\frac{\log(\frac{C}{\alpha_N-\delta})}{N}}+\epsilon(\delta)\right\} \ .$$
	\end{definition}

	This value can be approximated numerically.

	\section{Pseudo-code and simulations}\label{sec:simulations}
	
	In this section, we show a pseudo-code of the clustering algorithm and also give some considerations on the numerical aspects of the method. All codes used can be found at \url{ https://github.com/fanajman/Clustering-Sets-of-Functional-Data-by-Law}.
	
	\begin{algorithm}[H]\label{alg:cluster}
		\KwData{Functional data sets \{$\mathcal{Y}^u:u\in \mathcal{U}\}$
			\\ Independent directions generated following a Brownian bridge \ $(B_1,\cdots,B_M)$
			\\ Level $\alpha_N$
		}

		initialization\;
		\For{$u \in \mathcal{U}$}{
			\For{$m \in (1,\cdots,M)$}{
				\For{$n\in(1,\cdots,N)$}{
					$\mathcal{R}^{u,B_m}_{n} = <Y_n^u, B_m>$\;}
				$f^{u,B_m}_N(t) = \frac{1}{N}\sum_{n = 1}^{N}1_{\{R^{u,B_m}_{n} \leq t\}}$\;
			}
		}
		\For{$u\in\mathcal{U}$, $v \in \mathcal{U}$, $u \neq v$}{
			\For{$m \in (1,\cdots,M)$}{
				$D_N^{u,v}(B_m) = \sup_{t \in \mathbb{R}}\{|f^{u,B_m}_N(t)-f^{v,B_m}_N(t)|\}$\;}
			$\hat{D}_{M,N}(u,v) = \frac{1}{M}\sum_{m = 1}^M D^{u,v}_N(B_m)$\;
			$\hat{V}_B[\bar{D}^{u,v}_N] = \frac{1}{M-1}\sum_{m = 1}^M(D^{u,v}_N(B_m)-\hat{D}_{M,N}(u,v))^2$\;
		}
		$V^* = \sup_{u,v}\{\hat{V}_B[\bar{D}^{u,v}_N]\}$\;
		$\gamma^*_{\alpha_N} \text{as defined in \ref{def:gammaN}}$\;
		$P_0 = \{\{u\}:u\in \mathcal{U}\}$\;
		\For{$k \in (1,\cdots,|\mathcal{U}|)$}{
			\For{$C \in P_{k-1}$, $C' \in P_{k-1}$, $C\neq C'$}{
				$D(C,C') = \max\{\mathcal{D}_N(u,v): u\in C, v\in C'\}$\;
			}
			$(C^k_1,C^k_2) = \arg\inf_{(C_1,C_2)}\{D(C_1,C_2)<D(C,C'):C\in P_{k-1}, C'\in P_{k-1}, C\cap C' = \emptyset\}$\;
			$r_k =  D(C^k_1,C^k_2) $\;
				$C^k_{1,2} = C^k_1\cup C^k_2$\;
				$P_k = \{C \in P_{k-1} : C \neq C_k^1, C \neq C^k_1\} \bigcup C^k_{1,2}$\;
			\If{$r_k \geq \gamma^*_{\alpha_N}$}
			{$\hat{P}_{k^*} = P_{k-1}$\;
				$\textbf{Break}$}
		}
		
		$\textbf{Return} \ \  \hat{P}_{k^*}$
	\end{algorithm}

	In the following we will present a simulation study to show the performance of the clustering procedure. For this purpose, we propose two models to generate functional data sets.
	
	We call the first model a $\theta$-scaled Brownian Bridge (SBB). Each sample was generated independently as follows 
	$$
	Y^u_n(t) = \left(W(t)-\frac{t}{T}W(T)\right)\theta_u \ ,
	$$
	where $W$ is the Wiener process and $\theta_u \in \mathbb{N}$ is the fixed parameter defining of the $\theta$-scale of the Brownian Bridge in this simulation.
	
	We call the second model is Autoregressive (AR). Each sample was generated independently as
	$$
	Y^u_n(t) = Y^u_n(t-1)\theta'_u + \xi_n \ ,
	$$
	where $Y^u_n(0) = \xi_0$ and $\xi_n$ are independent random variables generated following a standard normal distribution, and $\theta'_u \in (0,1)$ is the model parameter.
	In both models, the functions were generated on an equispaced grid of 80 points.

	Seven functional data sets were generated for each model, these data sets followed three different laws, which means that there are three clusters. For SBB, the parameters $\theta$ chosen for each of the seven data sets are $(1,1,2,2,2,4,4)$. While for AR the parameters $\theta'$ chosen for each of the seven models are $(0.99,$ $0.99,$ $0.66,$ $0.66,$ $0.66,$ $0.33,$ $0.33)$. Then, in both cases, there are two clusters consisting of two sets of functional data, while the remaining one has three sets of functions.
	
	To initialise the clustering procedure we need to set two parameters, namely $\alpha_N = \sqrt{1/N}$ and $M_N = \sigma N$. For each different combination of sample size $N \in (40, 60, \cdots, 160)$, $\sigma \in (10,30,50)$ and model, we ran and retrieved a partition $\hat{P}_{k^*}$ for 100 replicates.

	Figures \ref{fig:per_brow} and \ref{fig:per_ar} show that the performance of the clustering procedure is very satisfactory. The partition retrieved from the SBB model data was correct in more than $90\%$ of the replicates for  $N\geq60$ and every value of $\sigma$. The case of the AR model is more challenging, achieving the correct partition in more than $85\%$ for $N\geq 80$. Interestingly, the best result for the AR model is obtained for $\sigma = 10$.  However, most of the results obtained for the different values of $\sigma$ are similar for both models, indicating that increasing the value of $M_N$ above $10N$ does not bring clear benefits.

	\begin{figure}[h]
		\centering
		\includegraphics[scale = 0.5]{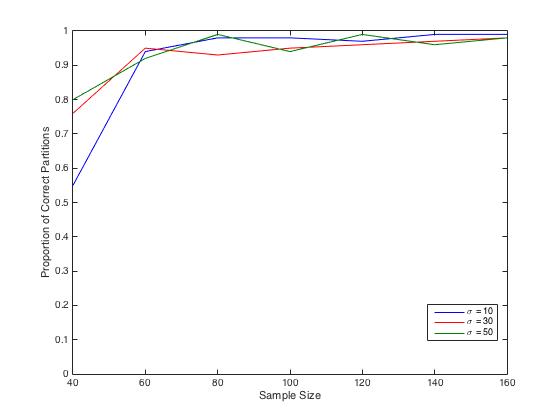}
		\caption{SBB model: The proportion of partitions correctly retrieved with data generated following the $\theta$-scaled Brownian bridge model for each sample size $N = (40,60,\cdots,160)$ and for each $\sigma \in (10, 30, 50)$.}
		\label{fig:per_brow}
	\end{figure}
	
	\begin{figure}[ht]
		\centering
		\includegraphics[scale = 0.5]{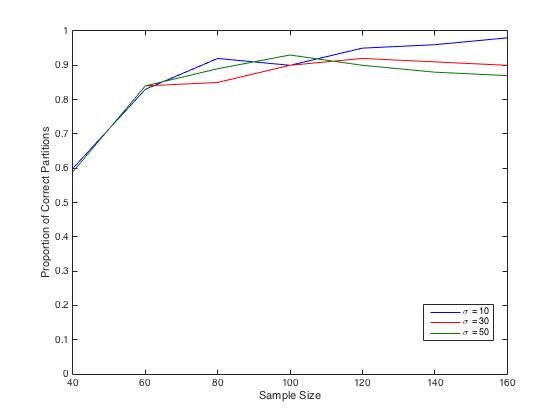}
		\caption{AR model: The proportion of partitions correctly retrieved with data generated following the Auto regressive model for each sample size $N = (40,60,\cdots,160)$ and for each for $\sigma \in (10, 30, 50)$.}
		\label{fig:per_ar}
	\end{figure}

	Finally, to better understand the procedure, we will characterize the errors made in cases where the partition obtained does not coincide with $P^*$. This procedure is based on a goodness-of-fit test, therefore there are two possible errors: on the one hand, the algorithm could assign to the same cluster a pair of data sets generated with different probability distributions. On the other hand, the algorithm could assign to different clusters two sets generated under the same probability law. We call the first a type $1$ error and the second a type $2$ error.  Fig. \ref{fig:e0} (resp. \ref{fig:e1}) shows the results for the error of type $1$ (resp. type $2$), for both models and $\sigma=10$.
	
		\begin{figure}[ht]
		\centering
		\includegraphics[scale=0.5]{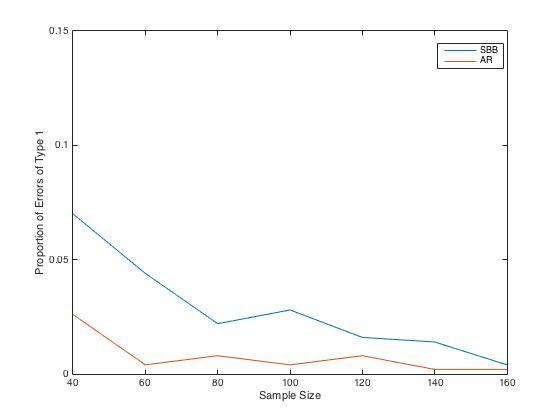}
		\caption{Proportion of data sets generated with the same law assigned to different clusters for the replicates generated with each model and $\sigma = 10$.}
		\label{fig:e1}
	\end{figure}
	
	\begin{figure}[ht]
		\centering
		\includegraphics[scale=0.5]{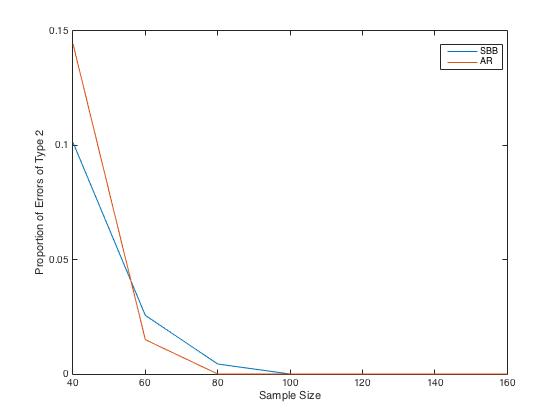}
		\caption{Proportion of data sets generated with different law assigned to the same cluster for the replicates generated with each model and $\sigma = 10$.}
		\label{fig:e0}
	\end{figure}

	As expected, the incidence of errors of type $2$ decreases with increasing sample size. For sample sizes greater than 100, there are no type $2$ errors for any model. For type 1 errors, a small decay pattern is observed, specially with the SBB model. We can also observe that while we obtain a small percentage of incorrect partitions for some combinations of parameters, the final partitions correctly assign most data pairs even for small values of $N$.

	To conclude, the clustering procedure was able to retrieve the correct partition from the set of functional data sets with realistic sample sizes. The high dimensionality of the data did not pose a significant challenge to the procedure and was in line with what was expected following the theoretical bound presented in Sections \ref{sec:bounderr} and \ref{sec:bernthres}. 
	
	The lack of assumptions on the law generating the data also makes this method useful for exploratory analysis, being able to retrieve the distribution that generated the models correctly identifying the clustering structure. The simplicity of the procedure and the heuristics of the partition selection method also make this procedure easy to use, minimizing the need for parameter tuning or fitting. This helps to avoid user-induced errors in the analysis.

	\section*{Acknowledgments}
		This work is part of the activities of FAPESP Research, Innovation and Dissemination Center for Neuromathematics (grant $\#$ 2013/ 07699-0 , S.Paulo Research Foundation (FAPESP). This work is supported by CAPES (88882.377124/2019-01) and FAPESP (2022/00784-0) grants. A.G and C.D.V. were partially supported by CNPq fellowships (grants 314836/2021-7 and 310397/2021-9) This article is also supported by FAPERJ ( $\#$ CNE 202.785/2018 and $\#$ E- 26/010.002418/2019), and FINEP ( $\#$ 18.569-8) grants. 
		
		The authors acknowledge the hospitality of the Institut Henri Poincar\'e (LabEx CARMIN ANR-10-LABX-59-01) where part of this work was written.

\clearpage

\end{document}